\title{A Critique of Keum-Bae Cho's Proof that $\pe \subsetneq \np$}
\author{Benjamin Carleton\thanks{Supported in part by NSF grant CCF-2006496.}}
\author{Michael C. Chavrimootoo\protect\footnotemark[1]}
\author{Conor Taliancich\protect\footnotemark[1]}
\affil{Department of Computer Science\\University of Rochester\\Rochester, NY 14627, USA}
\date{April 4, 2021}
\newcommand{\pe}{\mbox{\rm P}}
\newcommand{\np}{\mbox{\rm NP}}
\newcommand{\npctext}{NP-complete}
\newcommand{\sat}{\mbox{\rm SAT}}
\newcommand{\threecnftext}{3-CNF}
\newcommand{\threecnf}{\mbox{\rm \threecnftext}}
\newcommand{\threesattext}{3-SAT}
\newcommand{\threesat}{\mbox{\rm \threesattext}}
\newcommand{\twosat}{\mbox{\rm 2-SAT}}
\newcommand{\hornsat}{\mbox{\rm Horn-SAT}}
\newcommand{\true}{\mbox{\it true}}
\newcommand{\false}{\mbox{\it false}}
\newtheorem{theorem}{Theorem}
\newtheorem{lemma}{Lemma}
\begin{document}

\maketitle

\begin{abstract}
In this paper we critique Keum-Bae Cho's proof that $\pe \subsetneq \np$. This proof relates instances of \threesat{} to indistinguishable binomial decision trees and claims that no polynomial-time algorithm can solve \threesat{} instances represented by these trees. We argue that their proof fails to justify a crucial step, and so the proof does not establish that $\pe \subsetneq \np$.
\end{abstract}

\section{Introduction} \label{sec:intro}
The famous \pe{} vs.~\np{} problem is concerned with establishing the relationship between \pe, the class of languages decided by deterministic polynomial-time Turing Machines, and \np, the class of languages decided by nondeterministic polynomial-time Turing Machines. The class \pe{} is trivially known to be contained in \np. However, it remains unknown whether the two classes are equal. One way to approach this problem is to consider \npctext{} problems---sometimes called the ``hardest'' problems in \np. Showing that an \npctext{} problem can be decided by a polynomial-time algorithm would prove that $\pe = \np$, and showing that no polynomial-time algorithm can decide an \npctext{} problem would show that $\pe \neq \np$. An example of an \npctext{} problem is \threesat.

In the paper ``Indistinguishable binomial decision tree of 3-SAT\@: Proof of $\pe \subsetneq \np$''~\cite{cho:indistinguishable}, Keum-Bae Cho claims that \threesat{} has no polynomial-time algorithm. They do so by purportedly providing a superpolynomial lower bound on \threesat. In this critique, we show that their proof is erroneous and does not resolve the \pe{} vs.~\np{} problem.

In Section~\ref{sec:prelim} we introduce the definitions and concepts needed to understand Keum-Bae Cho's arguments. We explain Cho's argument in Section~\ref{sec:summary} and present our critique in Section~\ref{sec:critique}. We provide concluding remarks in Section~\ref{sec:conclusion}.

\section{Preliminaries} \label{sec:prelim}
\subsection{\threecnftext{} (see \cite{hop-ull:b:automata})}
If $x$ is a Boolean variable, then $x$ and $\overline{x}$ are called \textit{literals}. A \textit{clause} is a disjunction of one or more literals. For example, $x_1 \lor \overline{x_2} \lor x_3$ is a clause with variables $x_1$, $x_2$, and $x_3$, and literals $x_1$, $\overline{x_2}$, and $x_3$. A conjunction of one or more clauses is called a \textit{CNF formula}. For example, $(x_1 \lor \overline{x_2} \lor x_3 \lor x_4) \land (x_2 \lor x_4) \land (\overline{x_1})$ is a CNF formula. If every clause in a CNF formula consists of exactly three distinct literals, then that formula is called a \textit{\threecnftext{} formula}. For example, $(x_1 \lor \overline{x_2} \lor x_3) \land (x_1 \lor x_2 \lor \overline{x_3})$ is a \threecnf{} formula.

\subsection{\threesattext{} (see \cite{hop-ull:b:automata})}
\textit{\threesattext} is the set of all \threecnf{} formulas such that there is some assignment of \true{} and \false{} to the variables of the formula that makes the formula evaluate to \true. When such an assignment exists, we say the formula is \textit{satisfiable}. For example, using the assignment $x_1 = x_2 = x_3 = \true$ makes the formula $(x_1 \lor \overline{x_2} \lor x_3) \land (x_1 \lor x_2 \lor \overline{x_3})$ evaluate to \true. Notice that there may be multiple such assignments. For example, $x_1 = \true,\ x_2 = x_3 = \false$ also makes the formula evaluate to \true.

\subsection{Resolution (see \cite{rus-nor:b:ai})}
A \textit{rule of inference} is a rule that, given Boolean expressions $A_1, \ldots, A_n$, derives a new expression $B$ that is \textit{entailed} by the set $\{A_1, \ldots, A_n\}$, i.e., the derived expression $B$ is true whenever $A_1$ through $A_n$ are all true. We denote such a rule by
\[ \frac{A_1,\, \ldots,\, A_n}{B}, \]
where the premises of the rule are written above the bar and the conclusion is written below.

Suppose that we have a \threecnf{} formula $E = \bigwedge_{i=1}^n C_i$, where $C_1, \ldots, C_n$ are the clauses of $E$. Further suppose that $E$ contains two clauses, $C_j$ and $C_k$, containing complementary literals, i.e., two literals such that one literal is the negation of the other. For example, if $C_j = a_1 \lor \ldots \lor a_l \lor c$ and $C_k = b_1 \lor \ldots \lor b_m \lor \overline{c}$, then $C_j$ and $C_k$ contain the complementary literals $c$ and $\overline{c}$. In this case, $C_j$ and $C_k$ entail a new clause $B$ as follows:
\[ \frac{a_1 \lor \ldots \lor a_l \lor c,\quad b_1 \lor \ldots \lor b_m \lor \overline{c}}{a_1 \lor \ldots \lor a_l \lor b_1 \lor \ldots \lor b_m}. \]
The above inference rule is called \textit{resolution}. If the resolvent $B$ is then added to the clauses of $E$, we obtain a new formula $E^\prime = B \land \left( \bigwedge_{i=1}^n C_i \right)$. Since $B$ is true whenever $C_j$ and $C_k$ are true, $E^\prime$ is satisfiable if and only if $E$ is satisfiable.

\section{Summary of Cho's Argument} \label{sec:summary}
Throughout their paper, there are multiple instances where Cho uses terms whose definitions are either informal or not present. Accordingly, we will not attempt to provide definitions for these terms. We will only describe their properties that can be inferred from usage and are necessary for our critique.

\subsection{Definitions}
\subsubsection{Indistinguishable Binomial Decision Trees}
Fundamental to Cho's argument is the \textit{indistinguishable binomial decision tree}, which we henceforth refer to simply as a ``decision tree.'' In this summary we detail only the essential aspects of the decision tree, and we refer the reader to Cho's paper~\cite{cho:indistinguishable} for the full construction.

A decision tree is a \threecnf{} formula with a special form. Such formulas are derived from a single unit clause---a clause consisting of exactly one literal---by a sequence of transformations which add variables and clauses to produce an arbitrarily large expression. The result of these transformations is a set of clauses which lends itself to interpretation as a graph. An example of such a set is provided in Figure~\ref{fig:tree}. Some clauses produced by the construction have been omitted; we reproduce only the essential structure.

\begin{figure}[htb]
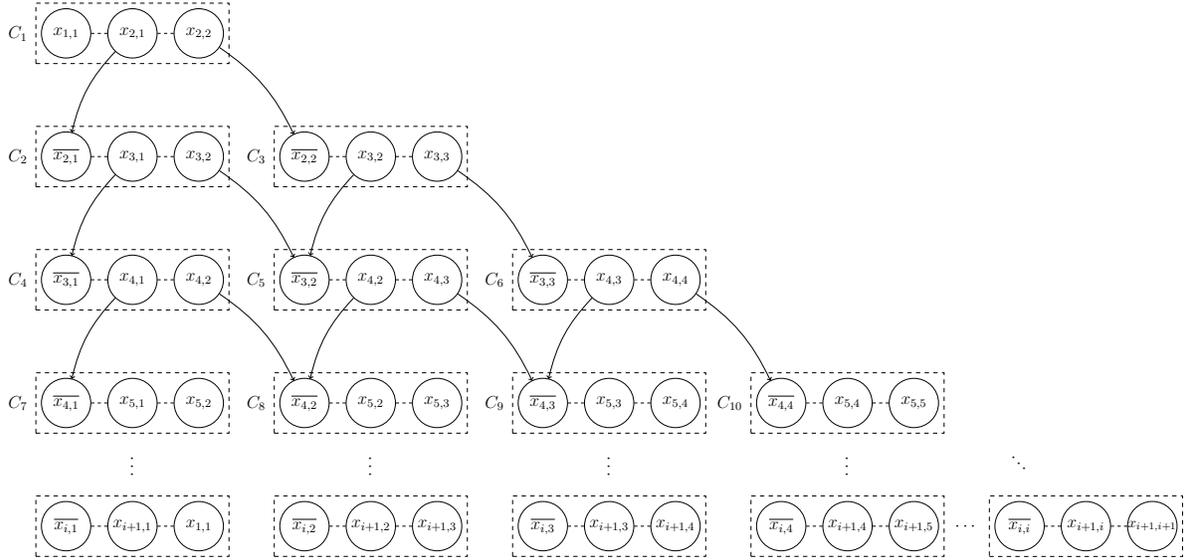

    \centering
    \includestandalone[width=0.95\linewidth]{tree}
    \caption{A set of clauses representing an indistinguishable binomial decision tree. Nodes of the induced graph are outlined with broken lines.}
    \label{fig:tree}
\end{figure}

A decision tree such as the one depicted in Figure~\ref{fig:tree} exhibits the structure described below. Let $E$ be the expression forming the decision tree. Each clause in $E$ is interpreted as being a node in a graph. The graph contains a directed edge from node $C_j$ to node $C_k$ if and only if $C_j$ contains a positive literal $x$ and $C_k$ contains its negation $\overline{x}$. We will see that this graph is a DAG\@, although Cho refers to it incorrectly as a tree.

The edges of the graph form what Cho terms a \textit{decision chain}. A decision chain is a list of pairs of complementary literals, which represent the resolved literals in a sequence of applications of the resolution inference rule. Thus the edges of the graph connect clauses containing complementary literals such that a path in the DAG corresponds to a valid sequence of resolution steps. For example, in Figure~\ref{fig:tree} there is a directed path connecting the nodes $C_1$, $C_2$, and $C_4$. As $C_1$ and $C_2$ contain the complementary literals $x_{2,1}$ and $\overline{x_{2,1}}$, they can be resolved to derive a new clause $B_1$. The clauses $B_1$ and $C_4$ now contain complementary literals $x_{3,1}$ and $\overline{x_{3,1}}$, and so the clauses can be resolved to derive a new clause $B_2$. Thus, the decision chain obtained along this path is the sequence $x_{2,1}, \overline{x_{2,1}}, x_{3,1}, \overline{x_{3,1}}$. The graph is constructed such that we may execute this process along any directed path.

In describing these paths, Cho introduces a notion of ``indistinguishability'' between variables. Furthermore, they claim that all paths from the single source node of the DAG to a given node are ``indistinguishable,'' since all such paths must have the same length and consist of indistinguishable variables. This notion, however, is never formally defined.

The structure discussed above and in Figure~\ref{fig:tree} characterizes an indistinguishable binomial decision tree. The decision tree is said to be ``binomial'' because the number of paths from the source node to a given successor node is always a binomial coefficient. This is shown in Lemma~3 of Cho's paper. The decision tree is said to be ``indistinguishable'' because all paths from the source to a given node are indistinguishable. This is shown in Lemma~2. However, as mentioned above, the notion of nodes being indistinguishable is never formalized.

\subsubsection{Entry Clauses}
Many of Cho's results are phrased in terms of ``entry clauses,'' but the term is never defined. At the very least, we can safely assume based on context that it does not refer to individual literals. This property alone is sufficient to demonstrate the flaw in Cho's argument, so henceforth we treat the term as a black box.

Cho also speaks of ``extracting'' entry clauses from a \threecnf{} instance, but they do not define this process either. In particular, Cho describes two different ways to ``extract'' entry clauses in the statements of Theorem~1 and Theorem~2, which are ``following the decision paths'' and ``following the tree levels.'' However, the specific details of these processes are not relevant to our critique, so we will not attempt to rigorously define these processes here.

\subsection{Cho's Argument}
We summarize the main results of Cho's paper below. Our intent is to enumerate each relevant claim made by Cho about the structures and algorithms used in their argument. Our critique will later identify a gap in Cho's argument where a nontrivial property of entry clauses is used without proof.

We have numbered the theorems below to coincide with the numbering of Cho's paper. The proofs of Lemma~1, Theorem~1, and Theorem~2 are omitted, as they are not relevant to our critique. We begin with Lemma~1, which pertains to the time needed to check the satisfiability of a decision tree. 

\renewcommand{\thelemma}{1}
\begin{lemma}[\cite{cho:indistinguishable}]
All variables contained in a binomial decision tree must be investigated in order to decide the satisfiability of an instance.
\end{lemma}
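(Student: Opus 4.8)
The plan is to prove the lemma by an adversary argument, the standard technique for establishing that every input symbol must be read. I would first make precise the informal phrase ``investigated'': say that an algorithm \emph{investigates} a variable $v$ on a given decision tree $E$ if, during its computation on $E$, it examines at least one clause of $E$ containing the literal $v$ or the literal $\overline{v}$. With this reading, the lemma asserts that every correct satisfiability decider must, on every decision tree, investigate each of its variables.

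Next I would argue by contradiction. Suppose some correct decider $D$ halts on a decision tree $E$ without investigating some variable $v$. Then the portion of the input on which $D$'s computation depends is contained in the subformula $E'$ obtained by deleting every clause of $E$ that mentions $v$ or $\overline{v}$; in particular, $D$ cannot distinguish $E$ from any other decision tree agreeing with $E$ outside those deleted clauses. The goal is to exhibit a second tree $\widehat{E}$ that coincides with $E$ on all of $E'$ but has the opposite satisfiability status. Because $D$ reads nothing that differs between $E$ and $\widehat{E}$, it must return the same verdict on both, contradicting its correctness on at least one of them.

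The heart of the proof---and the step I expect to be the main obstacle---is constructing $\widehat{E}$, i.e., showing that each variable is genuinely \emph{load-bearing} for satisfiability. Here I would exploit the decision-chain structure: each variable $v$ occurs as a complementary pair $v, \overline{v}$ along some source-to-node path, and hence serves as the pivot of a resolution step that the construction forces onto every derivation reaching the corresponding node. I would then show that toggling the occurrences of $v$ in exactly the clauses of $E \setminus E'$ either destroys or creates the resolvent needed to continue that chain, thereby flipping whether the empty clause is derivable and so flipping satisfiability. Verifying that this purely local modification alters the global satisfiability status, while leaving $E'$ untouched, is precisely where the argument is delicate.

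Finally, I anticipate that carrying this out rigorously demands a formal notion of when two paths (and their variables) are ``indistinguishable,'' so that the modified clauses can be guaranteed to interact with the remainder of $E$ in the same manner as the originals. The binomial count of paths from Lemma~3 and the indistinguishability asserted in Lemma~2 would be the natural tools; but since neither the indistinguishability relation nor the class of admissible modifications is made precise in the available construction, pinning down $\widehat{E}$ and confirming that it is again a legitimate decision tree is the crux on which a complete proof would stand or fall.
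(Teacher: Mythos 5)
The paper does not actually prove this lemma: it is Cho's Lemma~1, reproduced only so that the critique can later discuss how Cho misapplies it, and the authors explicitly omit its proof as irrelevant to their purpose. Their entire treatment is the remark that, once ``variables'' is read as ``literals,'' the statement amounts to a linear lower bound on the runtime of any algorithm deciding \threesat{}, which is trivially known. So there is no paper proof for your argument to match; what can be judged is whether your proposal stands on its own, and it does not. The adversary framework you set up is the standard and correct skeleton for such a lower bound, but under that framework the entire content of the lemma is the construction of the second instance $\widehat{E}$, and you explicitly flag that step as an unresolved ``crux.'' A sketch whose decisive step is acknowledged as not carried out is not a proof.

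There is also a concrete reason to doubt that the step can be carried out as you describe. You need $\widehat{E}$ to agree with $E$ on every clause not mentioning $v$, to have the opposite satisfiability status, and to remain a legitimate indistinguishable binomial decision tree (otherwise the adversary leaves the class over which the lemma quantifies). But decision trees are generated from a single unit clause by a fixed sequence of transformations, and nothing in the available description guarantees that the class contains two members of differing satisfiability that agree outside the clauses touching a single variable; it is entirely plausible that all decision trees share the same satisfiability status, in which case no such $\widehat{E}$ exists inside the class and the adversary must escape to arbitrary \threecnf{} formulas, which changes what is being proved. Your own closing paragraph concedes that the needed notion of indistinguishability and the class of admissible modifications are not pinned down. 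The honest self-contained route is the one the paper gestures at---the trivial linear lower bound for arbitrary \threesat{} deciders---rather than the delicate within-class toggling you outline.
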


The lemma states that the number of steps needed to check the satisfiability of a decision tree is lower-bounded by the number of variables in the instance. We note that when Cho uses the term ``variables'' in the statement of Lemma~1, they are referring to the literals in the instance. Therefore, this amounts to showing that there is a linear lower bound on the runtime of any algorithm which decides the satisfiability of a decision tree. However, this is trivially known, as the runtime of any algorithm deciding \threesat{} has a linear lower bound.

Theorem~1 and Theorem~2 both relate to the time needed to ``extract'' entry clauses.

\renewcommand{\thetheorem}{1}
\begin{theorem}[\cite{cho:indistinguishable}]
All entry clauses contained in an indistinguishable binomial decision tree cannot be extracted in polynomial time following the decision paths.
\end{theorem}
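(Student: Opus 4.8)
The plan is to reduce the statement to a counting argument about the decision paths of the tree. First I would invoke Lemma~3, which guarantees that the number of decision paths from the source node to a given node of the induced DAG is a binomial coefficient. The crucial quantitative observation is that these coefficients grow superpolynomially: for a decision tree whose levels are indexed by the $n$ underlying variables, the number of paths reaching a node near the central level is $\binom{n}{\lfloor n/2 \rfloor} = \Theta(2^n/\sqrt{n})$. Since the \threecnf{} instance encoding the decision tree has size polynomial in $n$ (the DAG has only polynomially many nodes and edges, one per clause and one per complementary-literal pair), a bound that is superpolynomial in $n$ is also superpolynomial in the instance size.

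Second, I would connect this path count to the cost of extraction. Each decision path corresponds, via its decision chain, to a distinct sequence of resolution steps, and the intended reading of ``extracting entry clauses following the decision paths'' appears to be that each such path yields an entry clause to be produced. Under this reading, the number of entry clauses that must be written down is bounded below by the number of decision paths, which by the first step is superpolynomial in the instance size. An algorithm that must materialize superpolynomially many distinct entry clauses cannot halt in polynomial time, purely on output-size grounds, and the theorem would follow.

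The hard part---and the step I expect to be the genuine obstacle---is justifying the assumption in the second step that extraction must expend separate work for \emph{each} path. The induced graph is a DAG, not a tree, so its source-to-node paths overlap heavily; indeed, the fact that the path counts are binomial coefficients is precisely a reflection of this combinatorial sharing. In a DAG one can count paths, and frequently aggregate information along all source-to-node paths, in time polynomial in the number of \emph{edges} by dynamic programming, even when the number of paths is exponential. Consequently, unless ``entry clause'' and ``extraction following the decision paths'' are defined so as to explicitly forbid such sharing---that is, unless one simply \emph{assumes} the nontrivial property that each of the superpolynomially many entry clauses must be produced independently---the counting argument does not close the gap between ``superpolynomially many paths'' and ``superpolynomial running time.'' I would therefore expect any honest proof to hinge on a precise definition of entry clauses that rules out aggregation over shared subpaths, and I anticipate that this is exactly the point at which the argument is most likely to be left unjustified.
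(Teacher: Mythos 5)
Your proposal cannot be checked against a proof in this paper, because the paper contains none: the statement you were given is Cho's Theorem~1, quoted verbatim from the work being critiqued, and we explicitly omit its proof (``The proofs of Lemma~1, Theorem~1, and Theorem~2 are omitted, as they are not relevant to our critique''). In Section~\ref{sec:critique} the theorem is simply \emph{assumed} for the sake of argument; the paper's contribution is to show that even granting it, Cho's Theorem~3 does not follow. So there is no intended argument for you to have reconstructed, and the exercise reduces to whether your attempt stands on its own.

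It does not, and your own third paragraph correctly diagnoses why. Your first step is consistent with what the paper reports of Cho's Lemma~3 (source-to-node path counts are binomial coefficients, hence superpolynomial in the instance size for central nodes). But your second step rests on an invented semantics---``each decision path yields an entry clause to be produced''---that nothing in the paper licenses. The paper is emphatic that ``entry clause'' is never defined by Cho and must be treated as a black box; the only property it extracts from context is that an entry clause is \emph{not} an individual literal. There is no warrant for identifying entry clauses with paths, nor for assuming that extraction must expend separate work per path. As you observe, the induced graph is a DAG whose paths overlap heavily, and quantities aggregated over all source-to-node paths are routinely computable in time polynomial in the number of edges by dynamic programming even when the number of paths is exponential; a lower bound on running time therefore cannot be inferred from a lower bound on the path count without an explicit definition of ``extraction'' that forbids such sharing. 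Your conclusion---that any honest proof must hinge on a definition that is never supplied---is essentially the same observation the paper makes about Cho's use of these undefined terms, but it is an admission that the theorem is not provable from the available material, not a proof of it.
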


\renewcommand{\thetheorem}{2}
\begin{theorem}[\cite{cho:indistinguishable}]
All entry clauses contained in an indistinguishable binomial decision tree cannot be extracted in polynomial time following the tree levels.
\end{theorem}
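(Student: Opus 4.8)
The plan is to mirror the structure one would use for Theorem~1, replacing the traversal of individual decision paths with a level-by-level sweep of the decision tree, and to reduce the claim to a counting argument driven by the binomial structure recorded in Cho's Lemma~3. First I would fix the decision tree $E$ together with its induced DAG and identify the levels $L_0, L_1, \ldots, L_n$, where $L_0$ is the single source node and $n$ is the parameter controlling the size of the construction. The target is to show that any procedure extracting entry clauses by processing these levels in order must perform a number of elementary steps that outgrows every polynomial in $n$.

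Next I would count the entry clauses encountered across the levels. By the binomial property (Cho's Lemma~3), the number of paths from the source to a given node is a binomial coefficient, so the number of paths reaching level $k$ is a sum of such coefficients, of order $2^k$, and the cumulative total over all levels $k = 0, \ldots, n$ is of order $2^n$. The intended conclusion is that the family of entry clauses to be extracted has superpolynomial cardinality, so that even a procedure spending only constant time per entry clause, and visiting each level exactly once, cannot finish within a polynomial number of steps. Granting that each entry clause must be produced by its own resolution-style derivation along the decision chain, this yields the desired superpolynomial lower bound.

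The hard part, and the step I expect to be the genuine obstacle, is justifying the count itself, since it depends entirely on what an entry clause is and on what ``following the tree levels'' commits the extraction to do. Two gaps would need to be closed. First, the quantity $2^n$ counts paths, not nodes; the DAG itself has only polynomially many nodes, so the argument tacitly requires entry clauses to correspond to paths or derivations rather than to the clauses explicitly present in $E$, a correspondence that is never established and that turns on the undefined meaning of the term. Second, even if one grants a count of $2^n$, the bound must be superpolynomial in the \emph{input size} and not merely in $n$: were the instance to list its clauses explicitly, its description would already be exponentially large and the level sweep would run in time polynomial in that description. Closing either gap demands a precise definition of entry clause together with a proof that deciding satisfiability forces the extraction of all of them, and it is precisely this nontrivial and unproven property on which the whole argument rests.
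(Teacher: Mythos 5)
Your proposal is not, and cannot be, a proof of this statement---and you say as much yourself in your final paragraph. There is also an important mismatch with the paper you should be aware of: the paper does not prove Theorem~2 at all. The statement is a verbatim quotation of a claim from Cho's paper, and the critique explicitly says that the proofs of Lemma~1, Theorem~1, and Theorem~2 are omitted as not relevant; the authors merely grant Theorems~1 and~2 ``for the sake of argument'' so as to isolate the flaw elsewhere, namely in the proof of Theorem~3, where Cho never justifies why an algorithm deciding \threesat{} must extract all entry clauses in the first place. So there is no paper proof to reconstruct, and any attempt was always going to founder on the fact that ``entry clause'' and ``following the tree levels'' are never defined---which is exactly where your attempt stalls.

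That said, the specific gaps you identify are genuine and worth recording. Your observation that the binomial path count of order $2^n$ counts \emph{paths} (equivalently, resolution derivations) rather than nodes of the DAG---which has only polynomially many nodes, since each node is a clause explicitly present in $E$---means that any superpolynomial lower bound must hinge on entry clauses corresponding to paths rather than to clauses, a correspondence nobody establishes. Your point that the bound must be superpolynomial in the \emph{input size} rather than in the parameter $n$ is the standard sanity check such constructions must pass. Both concerns are complementary to, but distinct from, the paper's own critique, which concedes Theorems~1 and~2 and instead attacks the unjustified leap from ``extracting all entry clauses takes superpolynomial time'' to ``deciding satisfiability takes superpolynomial time.'' The honest conclusion is the one you already reached: the statement cannot be proved (or refuted) as stated, because its central term is undefined, and your counting sketch is at best a plausible reconstruction of an argument whose key step---that satisfiability testing forces the extraction of every entry clause---remains unproven.
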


These two theorems purport to show that the processes of ``following the decision paths'' and ``following the tree levels'' both have superpolynomial lower bounds on their respective runtimes. Cho uses both of these results in their proof of Theorem~3 to argue that entry clauses cannot be extracted in polynomial time.

We treat Theorem~3 in greater depth. We include the statement of Theorem~3 and a summary of Cho's argument so that we may critique it in Section~\ref{sec:critique}.

\renewcommand{\thetheorem}{3}
\begin{theorem}[\cite{cho:indistinguishable}]
\threesat{} has no polynomial-time algorithm.
\end{theorem}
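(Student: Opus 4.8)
The plan is to exhibit the indistinguishable binomial decision trees as a family of hard \threesat{} instances on which no algorithm can run in polynomial time. Since each decision tree is by construction a \threecnf{} formula, it is already a legitimate \threesat{} instance, and the construction summarized in Section~\ref{sec:summary} yields arbitrarily large such formulas. Thus it suffices to show that any algorithm deciding the satisfiability of these particular instances must take superpolynomial time in the instance size: a polynomial-time algorithm for \threesat{} would in particular decide these instances quickly, so ruling that out establishes the theorem.

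The core of the argument is a reduction showing that deciding the satisfiability of a decision tree is at least as hard as extracting all of its entry clauses. Granting this reduction, I would invoke Theorems~1 and~2: the extraction may be carried out either by following the decision paths or by following the tree levels, and these two theorems forbid both of these in polynomial time. Since---by assumption---these two routes exhaust all ways to perform the extraction, extracting the entry clauses is superpolynomial, and hence so is deciding satisfiability. Lemma~1 supplies supporting intuition here, as it already guarantees that no decision procedure can avoid inspecting every literal of the instance.

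The hard part will be justifying the reduction---specifically, the claim that extracting the entry clauses is \emph{necessary} in order to decide satisfiability, together with the claim that the two enumerated extraction methods are the only possibilities. Both assertions depend on a structural property of entry clauses, an object that is never rigorously defined, and neither follows from Lemma~1 or from Theorems~1 and~2, which speak only about the two named extraction procedures rather than about arbitrary algorithms. I expect this to be the decisive obstacle: without a proof that every decision procedure for these instances must, in effect, extract all entry clauses by one of the two methods, the inference from Theorems~1 and~2 to the nonexistence of a polynomial-time algorithm does not go through, and the theorem remains unproved.
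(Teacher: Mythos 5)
Your reconstruction follows the same route as Cho's purported proof, and the obstacle you flag at the end is not something you can expect to overcome: it is precisely the fatal flaw. The missing step is the ``reduction'' you describe --- the claim that any algorithm deciding the satisfiability of a decision tree must extract all of its entry clauses, and must do so by one of the two named methods. Nothing in Lemma~1, Theorem~1, or Theorem~2 supplies this. Lemma~1 only says that every \emph{literal} must be investigated (a trivial linear lower bound that holds for any algorithm that reads its whole input), and ``entry clause'' is not a synonym for ``literal,'' so one cannot pass from ``all literals must be inspected'' to ``all entry clauses must be extracted.'' Theorems~1 and~2 bound the cost of two \emph{particular} extraction procedures; they say nothing about algorithms that never perform an extraction at all. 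A concrete counterexample to the needed reduction is the brute-force algorithm that enumerates all variable assignments: it decides the satisfiability of every decision tree without ever ``extracting'' anything, so the superpolynomial lower bound on extraction places no constraint on it. (Its running time happens to be exponential, but that is beside the point --- the lower-bound argument simply does not apply to it, and hence nothing rules out some other, faster algorithm of a non-extraction kind.)

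Consequently the argument establishes at best a superpolynomial lower bound on \emph{some} algorithms for \threesat{} --- those that proceed by extracting entry clauses via one of the two named methods --- which is a triviality, not a lower bound on \emph{all} algorithms. Since the statement would imply $\pe \neq \np$, this is not a gap you should expect to fill by sharpening the argument. Your instinct that the undefined notion of ``entry clause'' and the unjustified necessity claim are the decisive obstacles is exactly right; the correct conclusion is that the purported proof fails, not that it needs one more lemma.
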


\begin{proof}[Summary of the purported proof:]
Cho starts by considering some \threecnf{} instance containing a decision tree. They cite Lemma~1 to conclude that every ``variable'' in the decision tree must be investigated to decide the satisfiability of the instance. We reiterate here that Cho uses the term ``variables'' to refer to literals in the decision tree instance. We also note that the result of Lemma~1 is not used again after it is initially mentioned. Next, Cho states that two ways to extract entry clauses are by ``following the decision paths'' or ``following the tree levels step by step.'' They then cite Theorem~1 and Theorem~2 to conclude that neither of these processes can be executed in polynomial time. From this, Cho concludes that entry clauses cannot be extracted from the instance in polynomial time. Finally, Cho concludes that the satisfiability of the instance cannot be determined in polynomial time by any algorithm.
\end{proof}

\section{Our Critique} \label{sec:critique}
Our critique is very simple. We point out that in Theorem~3, the theorem in which Cho claims that no polynomial-time algorithm can decide \threesat, Cho misuses the result of Lemma~1 and thus fails to correctly prove the theorem.

As described in our summary, Theorem~3 is proved by combining Lemma~1, Theorem~1, and Theorem~2. For the sake of argument, we will assume they are correct; we will focus only on the flaw in Theorem~3. We also want to reiterate that the term ``entry clause'' is never defined by Cho. We will treat the term as a black box, as a precise interpretation is not required to identify the flaw in Cho's argument.

In Theorem~3, Cho supposes that we are given a Boolean formula containing a decision tree and attempts to show that the satisfiability of such a formula cannot be determined in polynomial time. They first establish that every variable in the tree must be investigated in order to decide the satisfiability of the formula. The rest of the proof establishes that we cannot extract every entry clause in polynomial time, and thus satisfiability cannot be determined in polynomial time. The flaw in Cho's argument is that they fail to justify why every algorithm that decides \threesat{} needs to ``extract'' all entry clauses. We believe that Cho unintentionally interchanged ``variable'' and ``entry clause'' in this purported proof, since there is no argument connecting these two parts of the proof. To see this, recall from Section~\ref{sec:summary} that Cho uses the term ``variables'' to refer to literals, and we can safely assume that the term ``entry clause'' does not refer to individual literals. Thus ``entry clause'' is likely not an equivalent term for ``variables,'' and so Lemma~1 does not justify the necessity of extracting entry clauses. Likewise, Theorems 1 and~2 only establish a lower bound on the complexity of extracting all entry clauses. They do not justify why every entry clause must be extracted. Therefore, the superpolynomial lower bound established by Cho is only applicable to \emph{some} algorithms that decide \threesat, not all algorithms that decide \threesat. This is a trivial result. An example of such an algorithm is one that, after checking that the input is a valid Boolean formula, tries all possible variable assignments until it either finds one that satisfies the given formula, or has exhausted all assignments.

We believe a crucial flaw in Cho's approach is that it only considers resolution-based algorithms, which Cho deems important for proving that \threesat{} cannot be decided in polynomial time. Cho's work ignores prior work done by Haken~\cite{hak:a:resolution} and Ben-Sasson and Wigderson~\cite{ben-wig:a:short-proofs} which already establishes, correctly, a superpolynomial lower bound on resolution.

\section{Conclusion} \label{sec:conclusion}
We conclude that Cho's claimed proof that $\pe \subsetneq \np$ is not correct. Nonetheless, the approach used to construct hard Boolean formulas is quite interesting as it uses resolution in a creative manner---by reverse-engineering the process to obtain complex formulas from literals---but, unfortunately, it isn't applied correctly. This isn't Cho's first attempt to understand what makes \sat{} hard. In previous work, they analyzed hidden number systems within \sat{} to try to understand what makes some of its subsets, such as \hornsat{} and \twosat, tractable \cite{cho:number}. The search for an answer as to whether $\pe = \np$ might not always succeed, but it can potentially help enrich other areas of study and provide us with tools to analyze other problems and their structures, so we look forward to seeing more attempts being made.

\section{Acknowledgments} \label{sec:ack}
We thank Lane A. Hemaspaandra, Mandar Juvekar, Arian Nadjimzadah, David E. Narv\'{a}ez, and Patrick Phillips for their helpful feedback on earlier drafts of this paper. Any remaining errors are the responsibility of the authors.

\bibliography{ms}

\end{document}